\documentclass{ifacconf}

\usepackage{graphicx}      
\usepackage{natbib}        
\usepackage{amsfonts}
\usepackage{algorithm}
\usepackage{algpseudocode}
\usepackage{amsmath}
\usepackage{cases}
\usepackage[dvipsnames]{xcolor}
\usepackage[most]{tcolorbox}
\usepackage{graphicx}
\usepackage{subcaption}
\usepackage{booktabs}
\newtheorem{ass}{Assumption} 
\newenvironment{proof}[1][Proof]{\par\noindent\textit{#1.} }{\hfill$\square$\par}
\begin{document}
\begin{frontmatter}

\title{Efficient Uniform Feasible-Set Sampling for Approximate Linear MPC} 


\author[First,Second]{Elias Milios} 
\author[First]{Felix Berkel}
\author[First]{Felix Gruber}
\author[Second]{Melanie N. Zeilinger} 
\author[First]{Kim P. Wabersich}

\address[First]{Corporate Research of Robert Bosch GmbH, Germany (e-mail:  \{{elias.milios, felix.berkel, felix.gruber, kimpeter.wabersich\}@de.bosch.com})}
\address[Second]{Institute of Dynamic Systems and Control, ETH Zurich, Switzerland (e-mail: \{emilios,mzeilinger\}@ethz.ch)}

\begin{abstract}
    Model Predictive Control (MPC) offers safe and near-optimal control but suffers from high computational costs. Approximate MPC (AMPC) mitigates this by learning a cheaper surrogate policy, typically by training a neural network on state-MPC input pairs. Generating training data is a major bottleneck, requiring solving the MPC for numerous states sampled from its feasible set. Since this feasible set is implicitly defined and unknown, efficient sampling is nontrivial but crucial. We propose the linear MPC Hit-and-Run (LMPC-HR) sampler for linear MPC with polyhedral constraints. We identify the feasible set boundaries along search directions, a crucial step within HR, by formulating the problem as a convex linear program, replacing expensive iterative searches with a single optimization step. A numerical study demonstrates that LMPC-HR reduces the computational cost of generating uniformly distributed samples from the feasible set by an order of magnitude compared to standard baselines.
\end{abstract}

\begin{keyword}
 Model predictive control, Real-time optimal control, Design methods for data-based control, Linear Systems
\end{keyword}

\end{frontmatter}

\section{Introduction}
%
MPC offers near-optimal performance and strict safety for modern control systems by repeatedly solving an optimal control problem online. However, its widespread industrial application is hindered by the significant engineering effort and hardware costs required to guarantee real-time embedded deployment. AMPC aims to overcome this by approximating the MPC policy with a computationally efficient surrogate policy, typically a neural network. This enables real-time deployment on embedded hardware with low and predictable compute and memory demands~\citep{AMPC_Berkel, AMPC_adaptive_Hose}.\par
%
The standard process for deriving a surrogate policy relies on training a neural network via supervised learning based on a dataset containing state-MPC label pairs. Generating the dataset involves sampling initial states from a user-defined state distribution and solving the MPC problem offline for each state to retrieve the corresponding MPC label. The choice of the MPC label is dictated by the specific training procedure. For example, behavioral cloning and improved variants focus on the MPC input and gradients thereof~\citep{AMPC_Hertneck, AMPC_Lüken}, while more advanced learning procedures may also employ the optimal value function~\citep{AMPC_Milios}.\par
Crucially, the performance of the resulting surrogate policy highly depends on the training data's sample size and distribution, often leading to poor performance in areas of the state space that are not sufficiently covered~\citep{DAgger_Ross}. To mitigate this, the uniform distribution often serves as a natural baseline, particularly in settings where the state distribution encountered during deployment is unknown, ensuring a broad coverage of the state space. Generating suitable data reflecting a uniform distribution, however, constitutes a major challenge for two main reasons. First, it necessitates solving the computationally expensive MPC problem for all desired states, which can easily number in the millions~\citep{AMPC_safety_Hose_Köhler}. Second, valid training samples must be drawn from the feasible set of the MPC, which is implicitly defined by the solvability of the MPC problem and therefore typically unknown.\par
To this end, common sampling strategies for generating uniformly distributed data typically involve rejection-based approaches. Here, samples are drawn from a uniform distribution over a bounding box or sphere containing the state constraints, or from a uniform grid discretizing such a bounding set, and rejected if the MPC solver detects infeasibility. While simple to implement, these approaches suffer from the curse of dimensionality, scaling exponentially in the state dimension. Moreover, since the feasible set of the MPC is unknown and often a small subset of such bounding sets, these naive approaches can result in excessive rejection rates. The high rejection rates constitute an additional computational burden, as MPC solvers often require many iterations to detect infeasibility. Together, this leads to significant computational overhead, often rendering the data generation process a computational bottleneck in AMPC.\par
A theoretically grounded alternative that addresses uniform sampling from implicitly defined sets is the HR Markov Chain Monte Carlo sampler~\citep{Hit-and-Run_Belisle}. Unlike naive rejection‑based approaches, HR scales favorably to higher dimensions, achieving polynomial complexity for convex bodies~\citep{Hit-and-Run_Lovasz}. The algorithm constructs a Markov chain that iteratively explores the interior of the feasible set, reducing the high‑dimensional sampling task at each step to a one‑dimensional sampling problem along a line segment contained within the feasible set. Since the feasible set is defined implicitly by the solvability of the MPC problem, identifying the intersection points of the search line with the feasible set boundary, however, creates a critical computational bottleneck. In particular, estimating these boundary points with standard strategies such as geometric bisection~\citep{Hit-and-Run_Rudolf}, incremental stepping~\citep{AMPC_Chen}, or uniform sampling along the search line~\citep{Hit-and-Run_improved_Smith}, often necessitates solving the MPC problem numerous times.\par
\textit{Contribution:}
This work proposes the LMPC-HR sampler, enabling the efficient generation of samples whose stationary distribution is approximately uniform over the feasible set of linear MPC with polyhedral constraints. We build upon the HR framework, enhancing its efficiency in the context of AMPC by framing the problem of identifying the intersection points of a search line with the feasible set as a convex linear program (LP). This enables the efficient calculation of such a boundary point with a single global optimization step, circumventing the computational burden of iterative methods. Our specific contributions are:
\begin{itemize} 
\item The LMPC-HR sampler, a sampling strategy tailored for linear AMPC, enabling the efficient generation of guaranteed uniformly distributed training data.
\item The derivation of a convex LP formulation that exactly determines the intersection of a search line with the boundary of the implicit feasible set of the MPC, which constitutes the core feature of the proposed LMPC-HR sampler.
\item A numerical study comparing the LMPC-HR sampler against standard sampling baselines, including uniform rejection sampling and naive HR approaches. The study demonstrates an order of magnitude reduction in computation time for generating uniformly distributed state samples within the MPC feasible set.
\end{itemize}
%
%
\textit{Related Work:}
Uniform sampling is a widely adopted strategy in AMPC, compare, e.g.,  with~\cite{AMPC_Hertneck, AMPC_uniform_sampling_Krishnamoorthy,AMPC_Drgona}, yet the question of how to sample the feasible set efficiently remains under-addressed. To this end,~\cite{AMPC_Chen} applies a random walk strategy within the feasible set of convex linear MPC problems. The approach relies on incremental stepping within fixed intervals to traverse the feasible set along a search direction. For each incremental step, the MPC problem is solved, and if feasible, the resulting sample is collected. While this ensures valid sampling from the interior of the feasible set, it does not ensure uniformity of the generated samples. \cite{AMPC_Winqvist} proposed using HR sampling for convex linear AMPC to improve the efficiency of the sampling process. However, the approach relies on pre-computing the feasible set as a convex maximal control invariant set to obtain an explicit polyhedral description. This, however, can be computationally prohibitive, especially for high-dimensional systems. Our work advances this paradigm by formalizing the boundary identification directly as an optimization problem. By leveraging the specific structure of linear MPC with polyhedral constraints, we extend the HR framework to high-dimensional systems where obtaining explicit geometric representations is computationally intractable, while avoiding the computational cost of iterative search procedures.\par
\textit{Outline:}
The paper is structured as follows. Section~\ref{sec: preliminaries} introduces the problem setting. Section~\ref{sec: boundary id section} presents the LMPC-HR sampler, where we first derive the LP-based boundary identification in Section~\ref{sec: LP boundary oracle}, then present the overall algorithm in Section~\ref{sec: lin Algo}, and finally analyze its complexity in Section~\ref{sec: lin complexity discussion}. Section~\ref{sec: experiments} provides a numerical evaluation, and Section~\ref{sec: conclusion} concludes the paper.
\section{Problem Setting}\label{sec: preliminaries}
We consider linear discrete-time systems of the form
\begin{equation}\label{eq: linear system}
    x(k+1) = Ax(k) + Bu(k),
\end{equation}
where $x(k) \in \mathbb{R}^{n_x}$ and $u(k) \in \mathbb{R}^{n_u}$ are the system state and control input, respectively, $A \in \mathbb{R}^{n_x \times n_x}$ and $B\in \mathbb{R}^{n_x \times n_u}$ denote the system matrices, and $k \in \mathbb{N}$ denotes the discrete time index.\par 
\subsection{Model Predictive Control}\label{sec: MPC}
Based on system~\eqref{eq: linear system}, we consider the following MPC problem
\begin{subequations}\label{eq: linear MPC}
\begin{align}
    V_\mathrm{MPC}(x(k)) & :=  \min_{u_{\cdot|k},\: x_{\cdot|k}} \ \sum_{i=0}^{N-1} \ell(x_{i|k},u_{i|k}) + V_\mathrm{f}(x_{N|k})\label{eq: generic MPC cost}\\
    \text{s.t. } \  & x_{0|k} = x(k), \quad x_{N|k} \in \mathbb{X}_\mathrm{f}, \label{eq: init and term constraint}\\
    & \forall i \in \{0,1,\ldots,N-1\}:\notag \\
    & \quad x_{i+1|k} = Ax_{i|k} + Bu_{i|k}, \label{eq: dynamics constraint}\\
    & \quad u_{i|k} \in \mathbb{U}, \quad x_{i|k} \in \mathbb{X},\label{eq: state input constraints}
\end{align}
\end{subequations}
where we optimize over a predicted input and state sequence $u_{\cdot|k} = \{u_{i|k}\}_{i=0}^{N-1}$ and $x_{\cdot|k} = \{x_{i|k}\}_{i=0}^{N}$, respectively, with $i$ denoting the prediction time step and $N$ denoting the prediction horizon. The cost~\eqref{eq: generic MPC cost} is defined by the stage cost $\ell:\mathbb{R}^{n_x} \times \mathbb{R}^{n_u} \rightarrow \mathbb{R}_{\geq 0}$ and terminal cost $V_\mathrm{f}:\mathbb{R}^{n_x} \rightarrow \mathbb{R}_{\geq 0}$. Equation~\eqref{eq: state input constraints} implements state and input constraints of the form $\mathbb{X}:=\{x \in \mathbb{R}^{n_x} \mid H_x x \leq h_x\}$ and $\mathbb{U}:=\{u \in \mathbb{R}^{n_u} \mid H_uu\leq h_u \}$, where $H_x \in \mathbb{R}^{n_{c,x} \times n_x}$, $h_x \in \mathbb{R}^{n_{c,x}}$, $H_u \in \mathbb{R}^{n_{c,u} \times n_u}$, and $h_u \in \mathbb{R}^{n_{c,u}}$ are constant matrices and vectors. Similarly, Equation~\eqref{eq: init and term constraint} implements a terminal state constraint of the form $\mathbb{X}_\mathrm{f}:=\{x \in \mathbb{R}^{n_x} \mid H_\mathrm{f} x \leq h_\mathrm{f}\}$ with constant matrix $H_\mathrm{f}\in \mathbb{R}^{n_{c,\mathrm{f}} \times n_x}$ and vector $h_\mathrm{f} \in \mathbb{R}^{n_{c, \mathrm{f}}}$.\par
The feasible set $\mathcal{X}_N$ of~\eqref{eq: linear MPC} is given as the set of all states $x(k)$ for which the optimization problem admits a feasible solution, formally defined by
\begin{equation}\label{eq: feasible set}
    \mathcal{X}_N:=\{x(k) \in \mathbb{R}^{n_x} \mid \exists u_{\cdot|k} \in \mathbb{R}^{Nn_u}: \eqref{eq: init and term constraint} - \eqref{eq: state input constraints}\}.
\end{equation} 
\begin{ass}\label{ass: boundedness}
    We assume that $\mathcal{X}_N$ is bounded and has a nonempty interior.
\end{ass}
If $\mathbb{X}$, $\mathbb{U}$, and $\mathbb{X}_\mathrm{f}$ are bounded, then boundedness of $\mathcal{X}_N$ follows by definition. Moreover, for system dynamics typically encountered in practice, bounded input constraints and a bounded terminal set often result in a bounded feasible set, even if the state constraints are unbounded, see, e.g.,~\cite{MPC_Rawlings}.\par
The optimal solution to~\eqref{eq: linear MPC} for any $x(k) \in \mathcal{X}_N$ is denoted by $u^*_{\cdot|k}(x(k))=\{u^*_{i|k} (x(k))\}_{i=0}^{N-1}$, and the resulting MPC input is given by its first element $u(k)=\pi_\mathrm{MPC}(x(k)):=u^*_{0|k}(x(k))$.
\subsection{Problem Formulation}\label{sec: problem formulation}
The overarching goal of this work is the efficient generation of training datasets for AMPC. Specifically, we aim to collect a dataset $\mathcal{D}=\{ s(x_j) \}_{j=1}^{N_s}$ containing $N_s$ state-MPC feature tuples $s(x)$ uniformly distributed over the implicitly defined feasible set $\mathcal{X}_N$. The structure of the tuple $s(x)$ depends on the specific imitation learning scheme, and, in the most common setting, collects state-input pairs, implying $s(x) = (x,\pi_\mathrm{MPC}(x))$.\par
To this end, we utilize the HR sampler, which iteratively constructs a Markov chain $\{x_j\}_{j=1}^{N_s}$ as uniform samples from the feasible set $\mathcal{X}_N$. Based on an initial feasible state $x_1 \in \mathcal{X}_N$, the method iterates as follows:
\begin{enumerate}
    \item[1.] At state $x_j$, sample $d_j$ uniformly from the L2-Ball $\{d \in \mathbb{R}^{n_x} \mid \lVert d \rVert \leq 1\}$ and set $d_j = \frac{d_j}{\lVert d_j\rVert}$.
    \item[2.] Identify the feasible line segment 
    \begin{equation}\label{eq:feasible_line_seg}
        \mathcal{S}_j := \{ x_j \pm \alpha d_j \mid \alpha \geq 0\} \cap \mathcal{X}_N.
    \end{equation}
    \item[3.] Sample $x_{j+1}$ uniformly from $\mathcal{S}_j$.
    \item[4.] If $j < N_s+1$: set $j\leftarrow j+1$  and go to 1.
\end{enumerate}\par
The computational challenge is the identification of the line segment $\mathcal{S}_j$ in Step~2, which requires determining the boundary limits $\alpha^*_-$ and $\alpha^*_+$  such that $\mathcal{S}_j = \{x_j + \alpha d_j \mid \alpha \in [-\alpha^*_-, \alpha^*_+] \}$.\par
Existing solutions typically rely on iterative search procedures, such as bisection or incremental stepping~\citep{Hit-and-Run_Rudolf, AMPC_Chen}. These methods treat the underlying MPC problem as a black box, requiring repeated feasibility checks to approximate the boundary. Since each check entails solving the optimization problem~\eqref{eq: linear MPC}, this iterative process becomes the dominant computational bottleneck.\par
In contrast, we propose to formulate the problem of identifying the boundary limits $\alpha^*_-$ and $\alpha^*_+$, referred to as the \textit{boundary identification problem} in the following, as an optimization problem. With this, instead of estimating $\alpha^*_-$ and $\alpha^*_+$ iteratively, we directly obtain these boundary limits by maximizing the step size $\alpha$ subject to the feasibility of~\eqref{eq: linear MPC}. Conceptually, for a given $x \in \mathcal{X}_N$, this optimization takes the form
\begin{equation}\label{eq: abstract_boundary_opt}
\Phi(x,d) =\max_{\alpha \geq 0} \ \alpha \quad \text{s.t.} \quad x + \alpha d \in \mathcal{X}_N.
\end{equation}
For linear systems with polyhedral constraints, this is a convex LP, as detailed in the upcoming section. Solving~\eqref{eq: abstract_boundary_opt} for the directions $d_j$ and $-d_j$ yields the exact boundary limits $\alpha^*_+ = \Phi(x_j, d_j)$ and $\alpha^*_- = \Phi(x_j, -d_j)$ in a single optimization step per direction.\par
\section{Efficient Boundary Identification}\label{sec: boundary id section}
This section presents the proposed LMPC-HR sampler, which utilizes a tractable LP formulation of the boundary identification problem~\eqref{eq: abstract_boundary_opt}. Section~\ref{sec: LP boundary oracle} establishes this LP formulation. Section~\ref{sec: lin Algo} details its integration into the HR framework, yielding the LMPC-HR sampler summarized in Algorithm~\ref{alg: lmpc_hr}. Section~\ref{sec: lin complexity discussion} analyzes the computational complexity compared to baseline methods.\par
\subsection{LP Formulation of Boundary Identification Problem}\label{sec: LP boundary oracle}
To solve~\eqref{eq: abstract_boundary_opt}, we must find the maximum step size $\alpha$ such that a limiting inequality constraint becomes active while maintaining the system dynamics constraints. Exploiting the linearity of the system~\eqref{eq: linear system}, we eliminate the dynamics constraints by parameterizing the predicted trajectory through the control sequence, reducing the problem to linear inequalities in the control inputs.\par
To this end, we define $z_u:= [u_{0|k}^\top, \dots, u_{N-1|k}^\top]^\top$ as the stacked sequence of predicted control inputs. By recursively substituting the linear dynamics~\eqref{eq: linear system}, the vector of predicted states $z_x := [x_{1|k}^\top, \dots, x_{N|k}^\top]^\top$ can be expressed as an affine function of the initial state $x(k)$ and the input sequence $z_u$:
\begin{equation}\label{eq: state_prediction}
    z_x = \Omega x(k) + \Gamma z_u,
\end{equation}
where the prediction matrices $\Omega \in \mathbb{R}^{N n_x \times n_x}$ and $\Gamma \in \mathbb{R}^{N n_x \times N n_u}$ are defined as
\begin{equation*}
\Omega = \begin{bmatrix} A \\ A^2 \\ \vdots \\ A^N \end{bmatrix}, \quad 
\Gamma = \begin{bmatrix}
    B & 0 & \ldots & 0 \\ 
    AB & B & \ldots & 0 \\
    \vdots & \vdots & \ddots & \vdots \\
    A^{N-1}B & A^{N-2}B & \ldots & B
\end{bmatrix}.
\end{equation*}
It is well known that this enables expressing the constraints in~\eqref{eq: linear MPC} solely in terms of the decision variable $z_u$ and the parameter $x(k)$, compare with~\cite{Explicit_MPC_Bemporad}. Specifically, the constraints~\eqref{eq: init and term constraint}--\eqref{eq: state input constraints} can be rewritten as
\begin{equation}\label{eq: compact_constraints}
    G z_u \leq w + F x(k).
\end{equation}
Here, $w \in \mathbb{R}^{n_c}$ is the stacked vector of the constraint vectors $h_u, h_x$, and $h_\mathrm{f}$, and the matrices $G \in \mathbb{R}^{n_c \times N n_u}$ and $F \in \mathbb{R}^{n_c \times n_x}$ are composed of the constraint matrices $H_u, H_x$ and $H_\mathrm{f}$ projected through the matrices $\Omega$ and $\Gamma$. The total number of linear inequality constraints is given by $n_c = N(n_{c,u} + n_{c,x}) + n_{c,\mathrm{f}}$.\par
Since~\eqref{eq: compact_constraints} parameterizes all inequality constraints subject to dynamic feasibility, we can substitute the search line $x(k) = x_j + \alpha d_j$ into~\eqref{eq: compact_constraints} and maximize over $\alpha$ to find the maximum feasible step size. This yields the following convex LP 
\begin{subequations}\label{eq: lp_oracle_final}
\begin{align}
    \Phi(x_j, d_j) &:= \max_{\alpha, z_u} \quad  \alpha \\
    \text{s.t.}\quad  G z_u - \alpha F d_j & \leq w + F x_j, \\
    \alpha &\geq 0.
\end{align}
\end{subequations}
Problem~\eqref{eq: lp_oracle_final} is an LP with $N n_u + 1$ decision variables and $n_c$ constraints, comparable in size to~\eqref{eq: linear MPC}, with only one additional variable $\alpha\in \mathbb{R}$. Standard LP solvers can compute the exact boundary limits $\alpha^*_-$ and $\alpha^*_+$ efficiently per direction, enabling the construction of $\mathcal{S}_j=\{x_j + \alpha d_j \mid \alpha \in [-\alpha^*_-, \alpha^*_+]\}$ without iterative feasibility checks. Moreover, problem~\eqref{eq: lp_oracle_final} remains a convex LP regardless of the MPC cost terms in~\eqref{eq: generic MPC cost} since it addresses only feasibility. 
\subsection{LMPC-HR Sampler}\label{sec: lin Algo}
This section integrates problem~\eqref{eq: lp_oracle_final} into the HR framework, yielding the LMPC-HR sampler in Algorithm~\ref{alg: lmpc_hr}.\par
Given an initial feasible state $x_1\in\mathcal{X}_N$ (e.g., the origin) and a dataset size $N_s$, each iteration $j=1,\ldots,N_s$ follows these steps. Sample a unit direction $d_j$. Solve the LP~\eqref{eq: lp_oracle_final} for $(x_j,d_j)$ and $(x_j,-d_j)$ to obtain the boundary limits $\alpha^*_+$ and $\alpha^*_-$, respectively. Draw $x_{j+1}$ uniformly from the feasible segment $\mathcal{S}_j=\{x_j+\alpha d_j\mid\alpha\in[-\alpha^*_-,\alpha^*_+]\}$. Record the training tuple $s(x_j)$ (state and MPC feature) in $\mathcal{D}$.\par
The following lemma verifies that the probability distribution of the samples produced by the LMPC-HR sampler converges to the uniform distribution over $\mathcal{X}_N$.
\begin{lem}\label{lem: convergence}
Suppose Assumption~\ref{ass: boundedness} holds. Then, the probability distribution of the state $x_j$ generated by Algorithm~\ref{alg: lmpc_hr} converges in total variation to the uniform distribution over $\mathcal{X}_N$ as $j\rightarrow \infty$.
\end{lem}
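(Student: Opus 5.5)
The plan is to reduce the lemma to the classical convergence result for Hit-and-Run on bounded convex bodies \citep{Hit-and-Run_Belisle, Hit-and-Run_Lovasz}. To do so, we verify that (i) $\mathcal{X}_N$ is a convex body and (ii) Algorithm~\ref{alg: lmpc_hr} realizes exactly the HR transition kernel on $\mathcal{X}_N$.

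For (i), we rewrite the feasible set~\eqref{eq: feasible set} via~\eqref{eq: compact_constraints} as
\[
\mathcal{X}_N=\{x\in\mathbb{R}^{n_x}\mid \exists z_u:\ Gz_u\le w+Fx\}.
\]
This is the projection onto $x$ of the polyhedron $\{(x,z_u)\mid Gz_u-Fx\le w\}$. By Fourier--Motzkin elimination it is therefore a polyhedron, and in particular closed and convex. Assumption~\ref{ass: boundedness} adds boundedness and a nonempty interior, so $\mathcal{X}_N$ is a compact convex body with positive Lebesgue measure. Hence the uniform distribution $\pi$ on $\mathcal{X}_N$ is well defined.

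For (ii), fix $x_j\in\mathcal{X}_N$ and a unit direction $d_j$. Convexity and compactness imply that $\{\alpha\in\mathbb{R}\mid x_j+\alpha d_j\in\mathcal{X}_N\}$ is a compact interval containing $0$. We show that the LP~\eqref{eq: lp_oracle_final} returns its right endpoint exactly. A pair $(\alpha,z_u)$ is feasible for the LP iff $\alpha\ge0$ and $x_j+\alpha d_j\in\mathcal{X}_N$ with witness $z_u$. The LP is feasible: $\alpha=0$ together with the witness for $x_j$ works. Its optimal value is finite because $\mathcal{X}_N$ is bounded. Since it is a feasible, bounded LP, its optimum is attained. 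Thus $\Phi(x_j,d_j)=\alpha^*_+$, and symmetrically $\Phi(x_j,-d_j)=\alpha^*_-$. Consequently $\mathcal{S}_j$ in Algorithm~\ref{alg: lmpc_hr} is exactly the chord of $\mathcal{X}_N$ through $x_j$ in direction $d_j$. Normalizing a uniform sample from the unit ball gives a direction uniform on the sphere. Therefore the transition from $x_j$ to $x_{j+1}$ is precisely the Hit-and-Run kernel $P$ with uniform directions and a uniform point on the chord. By induction, $x_j\in\mathcal{X}_N$ for all $j$, since the starting point $x_1\in\mathcal{X}_N$.

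It remains to invoke the standard HR convergence results. First, $P$ is reversible with respect to $\pi$. For directions drawn uniformly on the sphere, the kernel has density
\[
p(x,y)=\frac{2}{n_x\,\omega_{n_x}\,\ell(x,y)\,\lVert x-y\rVert^{n_x-1}},
\]
where $\omega_{n_x}$ is the volume of the unit ball and $\ell(x,y)$ is the length of the chord through $x$ and $y$. This density is symmetric in $x$ and $y$, so $P$ is $\pi$-reversible and $\pi$ is stationary. Second, $P$ is $\pi$-irreducible and aperiodic. Because $\ell(x,y)\le\operatorname{diam}(\mathcal{X}_N)<\infty$ and $\lVert x-y\rVert\le\operatorname{diam}(\mathcal{X}_N)$, the density satisfies $p(x,y)\ge c>0$ on $\mathcal{X}_N\times\mathcal{X}_N$. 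This is a Doeblin minorization $P(x,\cdot)\ge c\,\lambda(\mathcal{X}_N)\,\pi(\cdot)$, where $\lambda$ denotes Lebesgue measure. The minorization yields uniform ergodicity, so $\lVert P^j(x_1,\cdot)-\pi\rVert_{TV}\le(1-c\lambda(\mathcal{X}_N))^{j-1}\to0$ for every starting point $x_1$. This holds even when $x_1$ lies on the boundary, as for example the origin might. This is the result of \cite{Hit-and-Run_Belisle}.

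The main obstacle is step (ii): making sure the LP exactly characterizes the chord endpoints, including attainment of the optimum and the fact that convexity makes the feasible $\alpha$-set a single interval. Without these, the sampled segment could differ from the true chord and the kernel would no longer be the HR kernel. Boundedness from Assumption~\ref{ass: boundedness} is what makes both the attainment argument and the Doeblin constant work.
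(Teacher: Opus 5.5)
Your proof is correct, but it takes a more self-contained route than the paper's. The paper's proof is two lines. It notes that $\mathcal{X}_N$ is a convex polyhedron (citing \cite{Explicit_MPC_Bemporad}), so under Assumption~\ref{ass: boundedness} it is a convex body. It then imports convergence in total variation from \cite[Theorem~3]{Hit-and-Run_Lovasz}. That Algorithm~\ref{alg: lmpc_hr} really implements the Hit-and-Run kernel is taken for granted from the construction in Section~\ref{sec: LP boundary oracle}.

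You do two things the paper does not. First, you show explicitly that the LP~\eqref{eq: lp_oracle_final} is feasible, bounded, and attains exactly the chord endpoints, so the chain is precisely Hit-and-Run on $\mathcal{X}_N$. (Even without normalizing $d_j$, which line~4 of Algorithm~\ref{alg: lmpc_hr} literally omits, $\Phi(x,cd)=\Phi(x,d)/c$ leaves the kernel unchanged.) Second, you prove convergence directly from the explicit symmetric transition density. Stationarity follows from reversibility, and uniform ergodicity from a Doeblin minorization with constant $c\lambda(\mathcal{X}_N)=2\lambda(\mathcal{X}_N)/(n_x\omega_{n_x}\operatorname{diam}(\mathcal{X}_N)^{n_x})$.

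What your route buys is robustness. It works verbatim for any deterministic start $x_1\in\mathcal{X}_N$, including boundary points. Quantitative mixing results for convex bodies are typically stated for warm starts or for starting points at positive distance from the boundary. A point mass at an arbitrary $x_1$ need not satisfy those hypotheses.

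What your route does not buy is speed. Your contraction constant is at most $2/n_x$ and is typically exponentially small in $n_x$. That suffices for the lemma as stated. It cannot, however, support the polynomial mixing-time remark the paper makes right afterwards, which genuinely relies on the convex-body theory of \cite{Hit-and-Run_Lovasz}.
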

\begin{proof}
Because the equality constraints in~\eqref{eq: linear MPC} are linear and the inequality constraints in~\eqref{eq: linear MPC} define a convex polyhedron, $\mathcal{X}_N$ is a convex polyhedron, see~\cite{Explicit_MPC_Bemporad}. Furthermore, since $\mathcal{X}_N$ is bounded and has nonempty interior by Assumption~\ref{ass: boundedness}, we have that $\mathcal{X}_N$ is a convex body. Therefore,~\cite[Theorem~3]{Hit-and-Run_Lovasz} establishes that the probability distribution of $x_j$ generated by Algorithm~\ref{alg: lmpc_hr} converges in total variation to the uniform distribution over $\mathcal{X}_N$ as $j\rightarrow\infty$.
\end{proof}\par
\begin{algorithm}[t]\caption{LMPC-HR: Linear MPC HR Sampler}\label{alg: lmpc_hr}
\begin{algorithmic}[1]
\Statex \textbf{Input:} Initial state $x_1 \in \mathcal{X}_N$, Dataset size $N_s$.
\Statex \textbf{Output:} Dataset $\mathcal{D} = \{s(x_j)\}_{j=1}^{N_s}$.
\For{$j = 1$ to $N_s$}
\Statex \textbf{Data Collection:}
\State Solve MPC~\eqref{eq: linear MPC} for $x_j$ to retrieve $s(x_j)$.
\State Store $s(x_j)$ in $\mathcal{D}$.
\Statex \textbf{Sampling Step:}
\State Sample $d_j$ uniformly from the L2-Ball.
\State $\alpha^*_+ \leftarrow\Phi(x_j, d_j)$.
\State $\alpha^*_- \leftarrow \Phi(x_j, -d_j)$.
\Statex \textbf{Update:}
\State Sample step size $\beta$ uniformly from $[-\alpha^*_-, \alpha^*_+]$.
\State $x_{j+1} \leftarrow x_j + \beta d_j$.
\EndFor
\end{algorithmic}\end{algorithm}
%
%
%
Since $\mathcal{X}_N$ is a convex body, the LMPC-HR sampler also inherits the polynomial-time bound on the mixing time established in \citep[Theorem 3]{Hit-and-Run_Lovasz}. That is, the number of samples required to guarantee that the generated distribution is arbitrarily close to the uniform distribution scales polynomially with the state dimension.\par
We note that the computational cost required for solving the LP in lines~5 and~6 can be further reduced by warm-starting the solver using the solution from the previous iteration. Moreover, similar to~\cite{NMPC_Zavala_Biegler}, the sensitivities of the MPC solution, i.e., the gradient information of the optimal solution $u^*_{\cdot|k}(x_j)$ to~\eqref{eq: linear MPC} for a given state $x_j$, can be leveraged to provide a first-order approximation of the optimal solution  $u^*_{\cdot|k}(x_{j+1})$, serving as an informative warm-start for solving~\eqref{eq: linear MPC} at $x_{j+1}$ in line~2.
\subsection{Complexity Discussion}\label{sec: lin complexity discussion}
The primary benefit of the proposed strategy is that obtaining a feasible sample $x_j \in \mathcal{X}_N$ requires solving a convex LP exactly twice, instead of iteratively solving the MPC problem numerous times. Consequently, the computational advantage compared to alternative approaches is dominated by the effective number of MPC queries required to obtain a feasible sample $x_j \in \mathcal{X}_N$. In the following, we quantify this computational advantage by estimating the expected number of MPC queries required for baseline methods like volumetric rejection sampling, and HR approaches deploying directional rejection sampling and geometric bisection for the boundary identification. To facilitate this, we define the outer bounding box $\mathcal{B}(\mathbb{X}):=\{x \in \mathbb{R}^{n_x} \mid \lVert x \rVert_\infty \leq L\} \supseteq \mathbb{X}$ with side length $2L$ and $L:=\sup_{x \in \mathbb{X}} \lVert x \rVert$ as the worst-case uncertainty set in which any sampling and boundary identification method must operate.\par
\textit{Uniform Volumetric Rejection Sampling:} Unlike the subsequent methods, which operate as subroutines within the HR framework, volumetric rejection sampling is a standalone sampling strategy. It proceeds by solving~\eqref{eq: linear MPC} for samples directly drawn uniformly from the full-dimensional bounding box $\mathcal{B}(\mathbb{X})$, and rejecting those that are infeasible. The probability of drawing a sample from $\mathcal{X}_N$ is given by the ratio of volumes $\text{Vol}(\mathcal{X}_N) / \text{Vol}(\mathcal{B}(\mathbb{X}))$, where $\text{Vol}(\mathcal{B}(\mathbb{X}))$ scales exponentially with $(2L)^{n_x}$. The expected number of MPC queries scales inversely with this ratio, often being prohibitively large for higher-dimensional and constrained systems.\par
\textit{Directional Rejection Sampling-Based HR:} This method applies the same rejection principle as uniform volumetric rejection sampling within the HR framework. This restricts the sampling domain to the one-dimensional intersection of the search line and the bounding box, whose worst-case length is defined by the main diagonal of $\mathcal{B}(\mathbb{X})$ which scales as $2L\sqrt{n}$. By reducing the search space dimensionality from $n_x$ to $1$, the probability that a sample falls within $\mathcal{X}_N$ improves to the ratio $|\mathcal{S}_j|/(2L\sqrt{n})$, where $|\mathcal{S}_j| = |\alpha^*_+ - \alpha^*_{-}|$. Thus, the expected number of required MPC queries is bounded by $(2L\sqrt{n})/|\mathcal{S}_j|$. While this avoids the exponential scaling of uniform volumetric rejection sampling, the expected number of required MPC queries remains highly sensitive to the geometry of the constraints $\mathbb{X}$ and the feasible set $\mathcal{X}_N$, and is particularly high when $\mathcal{X}_N$ is thin relative to $\mathbb{X}$.\par
\textit{Bisection-Based HR:} To mitigate the computational risk associated with thin feasible segments, geometric bisection replaces probabilistic sampling with a deterministic search. Specifically, it approximates the boundary of $\mathcal{X}_N$ along a search line by maintaining a bracketing interval. Initialized with a search interval of worst-case width $(2L\sqrt{n})$, the algorithm iteratively verifies the feasibility of the interval's midpoint by solving the MPC. To reduce the bracket width to a precision $\epsilon>0$, the method requires $\lceil \log_2((L\sqrt{n})/\epsilon) \rceil$ iterations, which, consequently, constitutes the number of required MPC queries. Especially under high-precision requirements, this number can still be large compared to the fixed number of two LP queries required for the proposed method.\par 
In summary, casting the boundary identification as a linear program reduces the required optimization queries to a strict constant of two. This contrasts with the baseline methods, whose worst-case number of required MPC queries all scale with the state dimension $n_x$ according to $L^{n_x}$, $(2L\sqrt{n_x})/|\mathcal{S}_j|$, and $\log_2((2L\sqrt{n_x})/\epsilon)$. Consequently, the proposed method achieves a significant algorithmic speedup, proving particularly advantageous in high-precision settings or when the feasible set is small relative to the state constraints.\par
Finally, we acknowledge that problem~\eqref{eq: lp_oracle_final} has one additional variable $\alpha$ relative to~\eqref{eq: linear MPC}, incurring a small, often negligible, overhead. Conversely, Algorithm~\ref{alg: lmpc_hr} evaluates the LP only within the feasible set, whereas baseline methods solve the MPC repeatedly for potentially infeasible states, which can require many solver iterations to detect infeasibility.\par
\section{Numerical Experiments}\label{sec: experiments}
This section demonstrates the efficiency of our proposed approach on a linearized multi-rod inverted pendulum, comparing the cost of feasible state generation against the baseline methods introduced in Section~\ref{sec: lin complexity discussion}. To evaluate the impact of dimensionality, we vary the number of rods from one to three and assess the sampling efficiency across these increasing state dimensions.\par
\subsection{Example System and MPC Formulation}
We consider a chain of $n$ rigid rods connected by revolute joints, with the control objective of balancing all rods in the upright position. Each rod $i$ has length $l_i=1 \, \mathrm{m}$ and a point mass $m_i=1 \, \mathrm{kg}$ at its tip, and is actuated by a joint torque $\tau_i$. The state  is defined as $x = (\theta_1, \ldots, \theta_n, \dot{\theta}_1, \ldots, \dot{\theta}_n) \in \mathbb{R}^{2n}$, consisting of the absolute angles and angular velocities of each rod with respect to the vertical axis. The input $u = (\tau_1, \ldots, \tau_n) \in \mathbb{R}^{n}$ is defined by the applied joint torques. See Figure~\ref{fig:system} for a schematic. The discrete-time system dynamics are given by
\begin{equation}
    x(k+1) = \begin{bmatrix} I & \Delta t\, I \\ \Delta t\, M^{-1} D & I \end{bmatrix} x(k) + \Delta t \begin{bmatrix} 0 \\ M^{-1} \end{bmatrix} u(k).
\end{equation}
Here, $M \in \mathbb{R}^{n \times n}$ is the mass matrix with each element defined by $[M]_{ij} = n + 1 - \max(i,j)$, $D = g\,\mathrm{diag}(n{+}1{-}i) \in \mathbb{R}^{n\times n}$ is the gravity matrix, with $g = 9.81\,\mathrm{m/s^2}$, $I \in \mathbb{R}^{n \times n}$ is the identity matrix, and $\Delta t=0.1\, \mathrm{s}$ is the sampling time.
We formulate the MPC problem~\eqref{eq: linear MPC} with the costs $\ell(x,u) = \lVert x \rVert^2 + \lVert u \rVert^2$, $V_\mathrm{f}=0$, prediction horizon $N = 15$, and constraints $ \mathbb{X}=\{x \mid|\theta_i| \leq 2.5, |\dot\theta_i| \leq 3.5, i=1,\ldots, n\}$, $\mathbb{U}=\{u \mid |\tau_i| \leq 2, i=1,\ldots,n\}$, and $\mathbb{X}_\mathrm{f}=\{0\}$.\par 
\begin{figure}[ht] 
    \centering
    \includegraphics[]{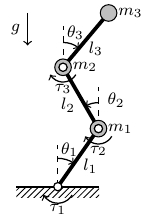}
    \caption{Schematic of the multi-rod inverted pendulum.}
    \label{fig:system}
\end{figure}
\subsection{Methodology}
The goal is to generate a dataset $\mathcal{D}=\{s(x_j)\}_{j=1}^{N_s}$ of $N_s = 1000$ uniformly distributed feasible samples $s(x_j) =x_j \in \mathcal{X}_N$ for each configuration $n \in \{1,2,3\}$. To this end, we implement the following strategies:
\begin{enumerate}
    \item \textit{Uniform Volumetric Rejection Sampling (UVRS)}: Samples are drawn uniformly from $\mathbb{X}$. For each sample, we try to solve~\eqref{eq: linear MPC} and add it to $\mathcal{D}$ if it is feasible. If it is infeasible, we reject and draw a new sample.
    \item \textit{Directional Rejection Sampling-Based HR (DRS-HR)}: The standard HR algorithm described in Section~\ref{sec: problem formulation} is deployed, where in each iteration $j$ state samples are drawn uniformly from the segment $\bar{S}_j = \{x_j\pm \alpha d_j \mid \alpha \geq 0\} \cap \mathbb{X}$ and rejected if infeasibility is detected by the MPC solver.
    \item \textit{Bisection-Based HR (BS-HR)}: Similarly to DRS-HR, the standard HR algorithm described in Section~\ref{sec: problem formulation} is deployed. However, for each sample $x_j$, the feasible line segment $S_j$ is identified with accuracy $\epsilon=0.001$ by geometric bisection based on the bounding box $\mathbb{X}$. 
    \item \textit{LMPC-HR}: Samples are drawn by the proposed LMPC-HR sampler detailed in Algorithm~\ref{alg: lmpc_hr}.
\end{enumerate}
All computations are performed on an AMD EPYC $7643$ compute node ($3.64\,$GHz, $1$ core, $2\,$GB RAM).
The optimization problems~\eqref{eq: linear MPC} and~\eqref{eq: lp_oracle_final} are solved using IPOPT with the CasADi interface \citep{Casadi_Andersson}.
\subsection{Results and Discussion}
\begin{table*}[t]
\centering
\caption{Comparison of sampling methods for generating $N_s=1000$ feasible samples for the multi-rod inverted pendulum with $n \in \{1,2,3\}$ rods. Entries marked ``--'' indicate that the method did not terminate within the allotted computation time.}
\label{tab:comparison}
\begin{tabular}{l cc cc cc}
\toprule
 & \multicolumn{2}{c}{$n=1\: (n_x=2)$} & \multicolumn{2}{c}{$n=2 \: (n_x=4) $} & \multicolumn{2}{c}{$n=3 \: (n_x=6)$} \\
\cmidrule(lr){2-3} \cmidrule(lr){4-5} \cmidrule(lr){6-7}
\textbf{Method} & \textbf{Solver Queries} & \textbf{Cost/Sample} & \textbf{Solver Queries} & \textbf{Cost/Sample} & \textbf{Solver Queries} & \textbf{Cost/Sample} \\
\midrule
UVRS   & $12380$ & $12.4$ & -- & -- & -- & -- \\
DRS-HR & $6311$  & $6.3$  & $17067$ & $17.1$ & $23507$ & $23.5$ \\
BS-HR  & $23485$ & $23.5$ & $23425$ & $23.4$ & $23362$ & $23.4$ \\
\textbf{LMPC-HR} & $\mathbf{2000}$ & $\mathbf{2.0}$ & $\mathbf{2000}$ & $\mathbf{2.0}$ & $\mathbf{2000}$ & $\mathbf{2.0}$ \\
\bottomrule
\end{tabular}
\end{table*}
Table~\ref{tab:comparison} compares the efficiency of each method in terms of total solver queries and the effective average cost per sample. For the baseline methods, each query involves solving the MPC problem~\eqref{eq: linear MPC}, which for $n=1$, $n=2$, and $n=3$, respectively, takes on average $0.02 \, \mathrm{s}$, $0.07 \, \mathrm{s}$, and $0.2 \, \mathrm{s}$ for a feasible state and $0.04 \, \mathrm{s}$, $0.23 \, \mathrm{s}$, and $0.68 \, \mathrm{s}$ for an infeasible one. For our LMPC-HR sampler, a query refers to solving problem~\eqref{eq: lp_oracle_final}, which for $n=1$, $n=2$, and $n=3$, respectively, takes on average $0.02 \, \mathrm{s}$, $0.05 \, \mathrm{s}$, and $0.15 \, \mathrm{s}$.\par
The UVRS approach demonstrates the expected poor scalability. While its efficiency is reasonable for the single-rod case ($n=1$) with an average cost of $12.4$ per feasible sample, it failed for higher dimensions. For $n \ge 2$, the method could not generate the target $1000$ samples within an 8-hour limit. At that point, it obtained $465$ feasible samples from a total of $117653$ attempts for $n=2$, and only $3$ feasible samples from $37455$ attempts for $n=3$.
The DRS-HR approach yields the lowest cost per sample among the baseline methods, requiring $6.3$ queries for $n=1$, which corresponds to a rejection rate of approximately $84\%$. Consistent with the analysis in Section~\ref{sec: lin complexity discussion}, the rejection rate worsens as the dimension increases, causing the average sampling cost to rise to $17.1$ for $n=2$ and $23.5$ for $n=3$.
The BS-HR approach exhibits an empirically constant sampling cost, contrasting with UVRS and DRS-HR. This reveals that, for the evaluated dimension increments, the dependency on $\log_2(\sqrt{n_x})$ is negligible compared to the dominant constant term $\log_2(2L/\epsilon)$. Nevertheless, the cost is high, remaining at approximately 23.4 solver queries per feasible sample.
In contrast, the proposed LMPC-HR sampler achieves a low and constant cost of exactly two LP queries per feasible sample across all dimensions, significantly outperforming the baseline methods. Furthermore, these queries are significantly faster, especially in higher dimensions, mainly because problem~\eqref{eq: lp_oracle_final} is only evaluated at feasible points. This combination of fewer and faster queries gives the LMPC-HR sampler a substantial advantage in overall sampling time.
%
%
%
%
%
%
%
%
%
\section{Conclusion}\label{sec: conclusion}
This paper presented the LMPC-HR sampler for efficient and approximately uniform sampling from the feasible set for approximate linear MPC. 
The core contribution is the formulation of the boundary identification problem within the HR algorithm as an optimization problem, replacing iterative feasibility checks with a single optimization step. For linear MPC with polyhedral constraints, this optimization problem is a convex LP, enabling exact and fast boundary identification. A case study on a multi-rod inverted pendulum with up to six states demonstrated an order of magnitude improvement in sampling efficiency over naive baselines, with the advantage growing in higher dimensions.

\bibliography{references}             

\end{document}